\documentclass[letterpaper, 10 pt, conference]{ieeeconf}  % Comment this line out
\IEEEoverridecommandlockouts                              % This command is only
\let\labelindent\relax
\usepackage{graphicx}
\usepackage[bottom]{footmisc}
\usepackage{amsmath,amssymb, amsthm}
\usepackage{booktabs}
\usepackage{enumitem}
\usepackage{thmtools}
\usepackage{amsfonts,bm}
\newcommand{\halfspace}{\kern 0.1em}
\usepackage{dsfont}
\usepackage{color}
\usepackage{xcolor}
\usepackage[compress]{cite}

\usepackage{wrapfig}

\newcommand{\V}{\mathcal{V}}

\newcommand{\T}{\mathcal{T}}
\renewcommand{\S}{\mathcal{S}}
\newcommand{\U}{\mathcal{U}}

\newcommand{\dAC}{DR}

\newcommand{\optism}{\mathrm{optsm}}
\newcommand{\pess}{\mathrm{pess}}
\newcommand{\true}{\mathrm{true}}

\newcommand{\pursuer}{Pursuer}
\newcommand{\evader}{Evader}
\newcommand{\Preqx}{\mathcal{P}_{\mathrm{req}}}

\newcommand{\Rsetx}{\mathcal{R}_x}
\newcommand{\Rcap}{\mathcal{R}_\mathrm{cap}}
\newcommand{\Rsety}{\mathcal{R}_y}
\newcommand{\Rinvsetx}{\mathcal{R}^{\scalebox{0.7}{$-1$}}_x}

\newcommand{\Qsetx}{\mathcal{Q}}
\newcommand{\Zsetx}{\mathcal{Z}}
\newcommand{\Qsetinf}{\mathcal{Q}_\infty}

\newcommand{\capture}{\mathrm{Capt}}
\newcommand{\invcapture}{\capture^{\scalebox{0.7}{$-1$}}}
\newcommand{\target}{\mathcal{T}}
\newcommand{\Qplus}{\mathcal{Q}^+}
\newcommand{\Gset}{\mathcal{G}}

\newcommand{\midcap}[1]{
\mathop{\scalebox{1.8}{$\cap$}}\limits_{\scriptscriptstyle#1} 
}
\newcommand{\midcup}[1]{
\mathop{\scalebox{1.8}{$\cup$}}\limits_{\scriptscriptstyle#1} 
}

\renewcommand{\emptyset}{\varnothing}

\newtheorem{theorem}{Theorem}
\newtheorem{corollary}[theorem]{Corollary}
\newtheorem{lemma}[theorem]{Lemma}
\newtheorem{proposition}[theorem]{Proposition}
\newtheorem{definition}{Definition}

\newtheorem{remark}{Remark}

\usepackage{algorithmic}
\usepackage[ruled,vlined,linesnumbered]{algorithm2e}    % Algorithm line numbered
\newcommand{\nonl}{\renewcommand{\nl}{\let\nl\oldnl}}   % Remove line number for one line
\usepackage{listings}

\usepackage{environ}

\NewEnviron{resizealign}{\sbox0{
    $\begin{matrix}\displaystyle\BODY\end{matrix}$}%
  \sbox1{$(\theequation)$}%
  \sbox2{\parbox{\dimexpr \wd0 + 2\wd1}%
    {\begin{align*}\BODY\end{align*}}}% for testing
  \noindent\resizebox{\textwidth}{!}{\usebox2}%
}

\title{\LARGE \bf
Feedback Dominance Analysis for Pursuit–Evasion Games on Graphs
}

\author{
    Yue Guan$^{1}$ 
    ~~ Daigo Shishika$^{2}$ 
    ~~ Dipankar Maity$^{3}$ 
    ~~ Michael Dorothy$^{4}$
    ~~ Panagiotis Tsiotras$^{1}$
    \thanks{The views expressed in this paper are those of the authors and do not reflect the official policy or position of the United States Army, Department of Defense, or the United States Government.}
    \thanks{$^{1}$Yue Guan and Panagiotis Tsiotras are with the School of Aerospace Engineering, Georgia Institute of Technology, Atlanta, GA, USA.
    {\tt\small $\{$yguan44, tsiotras$\}$@gatech.edu}}
    \thanks{$^{2}$Daigo Shishika is with the College of Engineering and Computing, George Mason University, Fairfax, VA, USA.
    {\tt\small dshishik@gmu.edu}}
    \thanks{$^{3}$Dipankar Maity is with the Department of Electrical and Computer Engineering, University of North Carolina at Charlotte, Charlotte, NC, USA.
    {\tt\small dmaity@charlotte.edu}}
    \thanks{$^{4}$Michael Dorothy is with DEVCOM Army Research Laboratory, Adelphi, MD, USA. {\tt\small michael.r.dorothy.civ@army.mil}}
}

\begin{document}

\maketitle
\thispagestyle{empty}
\pagestyle{empty}

%%%%%%%%%%%%%%%%%%%%%%%%%%%%%%%%%%%%%%%%%%%%%%%%%%%%%%%%%%%%%%%%%%%%%%%%%%%%%%%%
\begin{abstract}
    This work identifies the dominance regions for discrete, simultaneous-move pursuit-evasion games on graphs. 
    Existing geometric approaches provide efficient characterizations of winning regions, but typically provide only sufficient conditions and rely on open-loop strategies. 
    To address these challenges, we develop a set-based dynamic programming approach to characterize the \pursuer{}’s winning and losing regions, providing necessary and sufficient winning conditions under worst-case behavior. 
    The reachability analysis admits a set-chasing interpretation, allowing translation of dominance sets to feedback strategies that adapt to the players’ positions in real time. 
    For states where neither player can guarantee victory, we introduce an instantaneous matrix-game formulation and establish upper and lower bounds on the \pursuer{}’s winning probability. 
    Simulation results validate the correctness of the dominance-region characterization and the proposed bounds.
\end{abstract}

%%%%%%%%%%%%%%%%%%%%%%%%%%%%%%%%%%%%%%%%%%%%%%%%%%%%%%%%%%%%%%%%%%%%%%%%%%%%%%%%
\section{Introduction}

Pursuit--evasion games (PEGs) have been studied extensively due to their fundamental role in defense~\cite{isaacs1999differential}, security~\cite{chung2011search}, and multi-robot systems~\cite{shishika2020review}. 
In its basic form, a PEG, also known as the \emph{cops-and-robbers} game~\cite{nowakowski1983vertex}, considers an \evader{} attempting to avoid capture by a \pursuer{}. 
Many important variants extend this formulation, including \emph{reach-avoid} games~\cite{chen2016reach} with added evasion states, \emph{perimeter-defense} games~\cite{shishika2020review} with \pursuer{}s restricted to a region's perimeter, \emph{target-defense}~\cite{von2022circular} games with turrets aiming at \evader{}s, etc. 

A standard approach to solving PEGs is through a game-of-degree formulation, where agents optimize performance metrics such as time-to-capture or time-to-escape. 
In continuous domains, such problems can be formulated as differential games and solved via Hamilton--Jacobi--Isaacs (HJI) equations~\cite{basar1999dynamic}, yielding optimal \emph{feedback} strategies. 
In discrete settings, PEGs can be modeled as stochastic games, where Nash equilibria can be computed using value iteration~\cite{owen2013game}. 
While these approaches are general and provide guarantees, they do not exploit the underlying geometric structure and are therefore computationally expensive, if not intractable.

An alternative line of work leverages geometric structure to obtain efficient solutions for PEGs. 
In continuous obstacle-free environments, reachability-based constructions such as the Apollonius circle (AC)~\cite{ramana2017pursuit, pachter2019two} characterize the set of points that each agent can reach first. 
This leads to the notion of \emph{dominance regions}, where, roughly speaking, if the \evader{} can reach an evasion state strictly earlier than the \pursuer{}, then successful evasion is guaranteed. 
Such approaches are computationally efficient and admit closed-form solutions~\cite{dorothy2024one}. 
However, they provide only \emph{sufficient} conditions and typically yield \emph{open-loop strategies} based on shortest-path comparisons.

This limitation becomes critical in environments with obstacles. 
Even when the \pursuer{} can reach all evasion states faster than the \evader{}, it may still fail to guarantee capture due to \emph{dilemmas}, where committing to protect one evasion state allows the \evader{} to escape through others. 
In such cases, geometric shortest-path comparisons become overly conservative and fail to fully capture the strategic interaction.

\begin{figure}[t]
    \centering
    \includegraphics[width=0.35\linewidth]{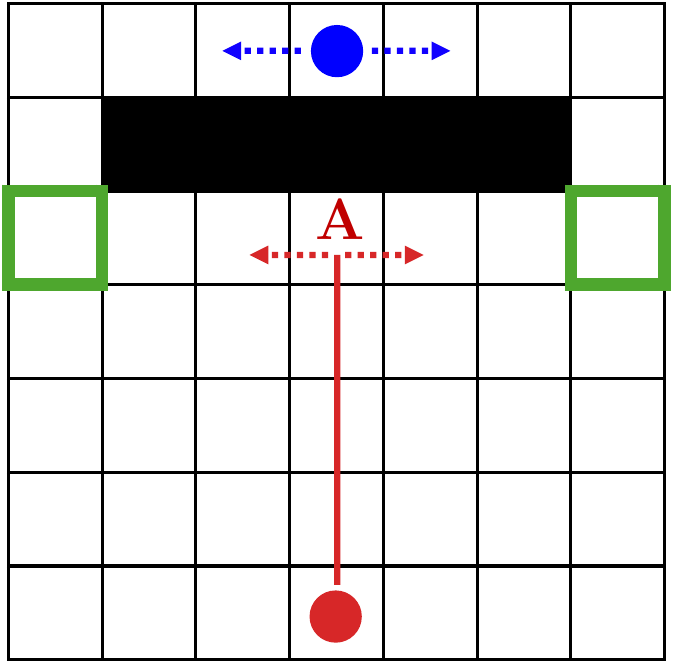}
    \vspace{-0.1in}
    \caption{Dilemma faced by the (blue) \pursuer{}, exploited by the (red) \evader{}. 
    Black cells denote obstacles, and green squares mark evasion states.}
    \label{fig:dilemma}
    \vspace{-0.25in}
\end{figure}

To illustrate this limitation, consider the scenario in Fig.~\ref{fig:dilemma}. 
Although the \pursuer{} is strictly closer to both evasion states, it must commit immediately to a specific direction due to the obstacle, while the \evader{} can delay its decision until it reaches point~$A$. 
As a result, the \evader{} can employ a feedback strategy that guarantees successful evasion by conditioning on the \pursuer{}’s position when the \evader{} reaches point~$A$.
This behavior cannot be captured by existing open-loop geometric analyses, thus motivating the feedback-based dominance analysis proposed in this work.

To address these challenges, prior works have introduced additional structural assumptions, such as sequential actions or restricted motion models~\cite{amigoni2012game, fomin2011guard, nagamochi2011cop}. 
More recent work~\cite{berneburg2025} considers simultaneous-move formulations, but remains limited to grid environments without obstacles.

\textit{Contributions:}
In this work, we develop a reachability-based framework for simultaneous-move PEGs on general graphs with arbitrary capture regions. 
We propose a set-based dynamic programming approach to characterize the winning and losing regions of the \pursuer{}, providing necessary and sufficient conditions under worst-case encounters. 
This characterization admits a natural set-chasing interpretation~\cite{guan2023adversarial}, leading to \emph{closed-loop} strategies that adapt to the agents' actions. 
Furthermore, for states where neither player can guarantee victory, we introduce an instantaneous matrix-game formulation to analyze the resulting \emph{gray zone}, and derive upper and lower bounds on the \pursuer{}’s winning probability. 
Overall, the proposed framework bridges geometric dominance analysis and feedback-based methods, while remaining computationally tractable and applicable to general graph environments.
The effectiveness of the proposed approach is validated through numerical simulations.

%%%%%%%%%%%%%%%%%%%%%%%%%%%%%%%%%%%%%%%%%%%%%%%%%%%%%%%%%%%%%
\section{Problem Formulation}
We study a discrete-time, two-agent pursuit--evasion game (PEG) with discrete state and action spaces.
The \pursuer{} and the \evader{} move on a shared discrete environment $\S$.
Their states at time $t$ are denoted by $x_t \in \S$ and $y_t \in \S$, respectively, and their actions by $u_t \in \U(x_t)$ and $v_t \in \V(y_t)$, where $\U(x_t)$ and $\V(y_t)$ are nonempty, state-dependent action sets.
We assume deterministic, separable dynamics:
\[
x_{t+1} = f(x_t,u_t),
\qquad
y_{t+1} = g(y_t,v_t).
\]

The \pursuer{}'s capture region is given by $\capture(x_t)\subseteq \S$, and the set of evasion states is $\target\subseteq \S$.
Capture occurs when $y_t\in\capture(x_t)$, whereas successful evasion occurs when $y_t\in\target\setminus\capture(x_t)$.
The game terminates at the first occurrence of either event.

At each nonterminal time step, both players observe the current joint state $(x_t,y_t)$ and select their actions simultaneously.
Let $\Pi$ denote the set of admissible \pursuer{} strategies $\pi: \S \!\times\! \S \!\to\! \mathcal{P}(\U(\cdot))$, where $\pi(u| x,y)$ denotes the probability of selecting action $u$ at joint state $(x,y)$.
The \evader{}'s strategy $\sigma$ and admissible strategy set $\Sigma$ are defined analogously.

We consider a \emph{game of kind}: the \pursuer{} wins if capture occurs or if the game continues indefinitely without successful evasion; the \evader{} wins if successful evasion occurs.
We say that a \pursuer{} strategy $\pi$ guarantees a win from a given initial joint state if it achieves the winning condition with \emph{probability one} against every admissible \evader{} strategy.

In summary, the game is described by the tuple
\begin{equation}
    \texttt{PEG} = \langle
    \S, \mathcal{U}, \mathcal{V}, f, g, \capture{}, \target
    \rangle,
\end{equation}
and we pursue the following research objectives:

\textbf{Objective 1:}
Characterize the dominance region from which the \pursuer{} can guarantee a win, as well as the region from which the \evader{} can guarantee  finite-time evasion.

\textbf{Objective 2:}
Construct feedback strategies that guarantee the \pursuer{}'s victory from its dominance region.

\textbf{Objective 3:}
For states outside both dominance regions, derive bounds on \pursuer{}'s optimal worst-case win rate.

%%%%%%%%%%%%%%%%%%%%%%%%%%%%%%%%%%%%%%%%%%%%%%%%%%%%%%%%%%%%%
\section{Pursuer Dominance Region}
We begin by defining the \pursuer{}'s dominance region. 
\begin{definition}
The $k$-step Pursuer's Dominance Region (\dAC{}), denoted by $\Qsetx(y,k)$, is the set of \pursuer{} states such that, given the \evader{}'s current state $y$, the \pursuer{} can guarantee (with probability one) that the \evader{} does not successfully evade within $k$ time steps.
\end{definition}
\vspace{+0.1in}

To characterize the set $\Qsetx(y,k)$, we adopt a reachability-based analysis similar in spirit to the Apollonius circle construction~\cite{dorothy2024one} in continuous pursuit--evasion.
We first define the one-step reachable sets, which contain all states an agent can reach in the next time step from its current state:
\begin{equation}
\label{eqn:reachable-set}
\begin{aligned}
\Rsetx(x_t) &= \{x_{t+1} \mid \exists u\in\U(x_t) \text{ s.t. } x_{t+1} = f(x_t,u)\}, \\
\Rsety(y_t) &= \{y_{t+1} \mid \exists v\in\V(y_t) \text{ s.t. } y_{t+1} = g(y_t,v)\}.
\end{aligned}
\end{equation}

Given a set of \pursuer{} states $\mathcal{X}$, we define the inverse reachability operator which returns the set of states that can reach $\mathcal{X}$ in one time step:
\begin{equation}
\label{eqn:inv-reachable-set}
\begin{aligned}
\Rinvsetx(\mathcal{X}) &= \{x \mid \exists u\in\U(x) \text{ such that } f(x,u) \in \mathcal{X}\},
% \Rinvsety(\mathcal{Y}) &= \{y_t \mid \exists v \text{ such that } g(y_t,v) \in \mathcal{Y}\},
\end{aligned}
\end{equation}

For notational simplicity, we further define the next-step potential capture region and the inverse capture region as 
\begin{align*}
    &\Rcap(x_t) = \bigcup_{x_{t+1} \in \Rsetx(x_t)} \capture(x_{t+1}),\\
    &\invcapture(y_t) = \{x_t \mid y_t \in \capture(x_t)\}.
\end{align*}

\subsection{Current-step Failure}
Similar to other dynamic programming approaches, we begin with the terminal condition for the \dAC{}, which consists of all states $x_t$, given the current \evader{} state $y_t$, that do not result in an immediate \pursuer{} defeat. 
%
% We denote this set by $\Qsetx(y_t, 0)$.
\begin{equation}
    \label{eqn:Q-set-0}
    \Qsetx(y_t,0) = \{x_t \mid y_t \in \capture(x_t) \cup \target^c\},
\end{equation}
where $\target^c = \S \setminus \T$ denotes the complement of the set of evasion states.
In words, a state pair $(x_t, y_t)$ does not constitute a \pursuer{}'s defeat if either the \evader{} is captured by the \pursuer{} (i.e., $y_t \in \capture(x_t)$), 
or the \evader{} has not reached an evasion state (i.e., $y_t \in \target^c$).

Since $\Qsetx(y_t,0)$ dictates the fundamental requirement for the \pursuer{} to survive the current step, any $k$-step dominance guarantee must recursively satisfy this baseline. For this reason, we denote $\Preqx(y_t) \triangleq \Qsetx(y_t,0)$ and refer to it as the \emph{required set}, serving as the foundational building block for constructing the $k$-step \dAC{} in the sequel.

\begin{figure*}[t!]
    \centering
    \includegraphics[width=\linewidth]{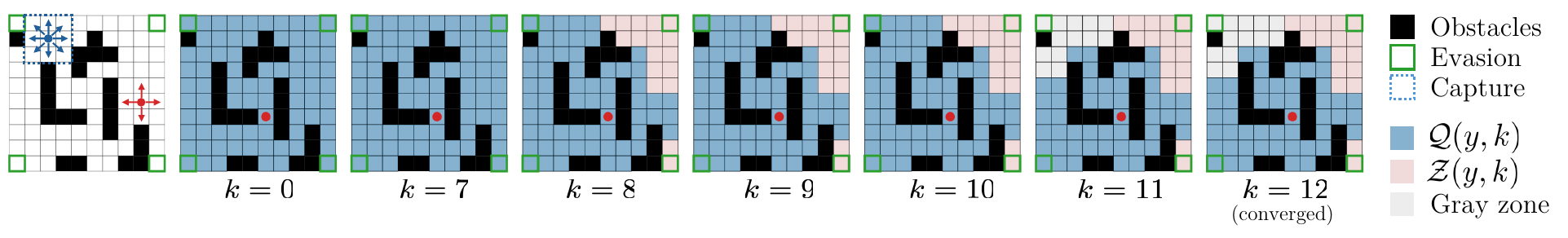}
    \vspace{-0.35in}
    \caption{
    An illustrative example of the sequence of Q-sets and Z-sets in a $10\times10$ grid environment. 
    The red dot indicates the current \evader{} position $y_t$ for which the Q-sets and Z-sets are visualized. 
    The \pursuer{} has 8-way mobility and a $3\!\times\!3$ capture region, whereas the \evader{} has 4-way mobility.
    }
    \label{fig:QZ-sequence}
    \vspace{-0.2in}
\end{figure*}

\subsection{Single-step \dAC{}}

Now, we examine the conditions under which the \pursuer{} avoids immediate defeat at the next time step.

\begin{proposition}
\label{prop:single_step_dac}
Given the current states $x_t$ and $y_t$, the \pursuer{} can guarantee that it does not lose the game at time $t+1$ if and only if $x_t \in \Qsetx(y_t,1)$, where:
\begin{align}
\label{eqn:P-req}
\Qsetx(y_t,1) =
&\; \invcapture(y_t) \\
&\midcup{} \Big(\Preqx(y_t) \midcap{} \Rinvsetx \big( 
\underbrace{\midcap{y_{t+1} \in \Rsety(y_t)} \Qsetx(y_{t+1}, 0)}_{\Qplus(y_t, 1)}
\big) \Big). \nonumber
\end{align}
\end{proposition}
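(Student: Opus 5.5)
The plan is to prove both directions directly from the game rules, splitting according to whether the game has already terminated at time $t$. Throughout, ``does not lose at time $t+1$'' is read as: either the game terminates at time $t$ with the \pursuer{} not losing, or the game continues and $(x_{t+1},y_{t+1})$ is not an evasion event. In symbols, the latter is $x_{t+1}\in\Preqx(y_{t+1})=\Qsetx(y_{t+1},0)$. If the game terminates at $t$ by evasion, the \pursuer{} has already lost, so it cannot guarantee the property.

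\textbf{Sufficiency.} Suppose $x_t$ lies in the right-hand side of \eqref{eqn:P-req}. If $x_t\in\invcapture(y_t)$, then $y_t\in\capture(x_t)$ and capture occurs at time $t$, so the \pursuer{} trivially does not lose at $t+1$. Otherwise $x_t\in\Preqx(y_t)\cap\Rinvsetx(\Qplus(y_t,1))$.
\begin{itemize}
\item Since $x_t\in\Preqx(y_t)$ and $x_t\notin\invcapture(y_t)$, we get $y_t\in\target^c$, so the game is nonterminal at time $t$.
\item By the definition of $\Rinvsetx$ there is a $u\in\U(x_t)$ with $x_{t+1}=f(x_t,u)\in\Qplus(y_t,1)$. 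The \pursuer{} plays this $u$ deterministically.
\item Whatever (possibly randomized) action the \evader{} selects, the realized $y_{t+1}=g(y_t,v)$ lies in $\Rsety(y_t)$. The intersection defining $\Qplus(y_t,1)$ then gives $x_{t+1}\in\Qsetx(y_{t+1},0)$.
\end{itemize}
Hence no evasion occurs at $t+1$ with probability one.

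\textbf{Necessity.} This is the contrapositive: if $x_t$ is not in the set, every \pursuer{} strategy loses at $t+1$ with positive probability against some \evader{} strategy. There are two cases.
\begin{itemize}
\item If $x_t\notin\invcapture(y_t)$ and $x_t\notin\Preqx(y_t)$, then $y_t\in\target\setminus\capture(x_t)$. Evasion has already happened at $t$, so the \pursuer{} has lost and the claim holds.
\item Otherwise $x_t\in\Preqx(y_t)\setminus\invcapture(y_t)$, the game is nonterminal, and $x_t\notin\Rinvsetx(\Qplus(y_t,1))$. Then for every $u\in\U(x_t)$ the state $f(x_t,u)$ lies outside $\Qplus(y_t,1)$. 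So for each $u$ there exists $y^u\in\Rsety(y_t)$ with $f(x_t,u)\notin\Qsetx(y^u,0)$, i.e.\ $y^u\in\target\setminus\capture(f(x_t,u))$.
\end{itemize}

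The main obstacle is the second case, because moves are simultaneous. The \evader{} cannot condition on the realized $u$, and the \pursuer{} may randomize. To handle this, I would let the \evader{} randomize uniformly over $\Rsety(y_t)$, choosing for each target state some action achieving it. This set is finite by the discrete-action setting, and if it were infinite one would instead fix a full-support distribution.

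Let $\pi$ be any \pursuer{} strategy. Some action $u$ has $\pi(u\mid x_t,y_t)>0$, and $y^u$ is chosen with positive probability independently of $u$. The event $\{u_t=u,\;y_{t+1}=y^u\}$ therefore has positive probability, and on it evasion occurs at $t+1$. So no strategy guarantees non-loss with probability one.

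Combining the two directions gives the equivalence.
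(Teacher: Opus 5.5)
Your proof is correct and uses the same decomposition as the paper's sketch: either immediate capture ($x_t\in\invcapture(y_t)$), or survival at time $t$ ($x_t\in\Preqx(y_t)$) together with a move into $\Qplus(y_t,1)$, which is exactly membership in $\Rinvsetx(\Qplus(y_t,1))$. Your necessity direction is more careful than the paper's. The full-support \evader{} randomization explicitly handles the fact that, under simultaneous moves, the \evader{} cannot react to the realized $u_t$ of a possibly randomizing \pursuer{}, a point the sketch leaves implicit in its ``for all possible $y_{t+1}$'' phrasing.
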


\begin{proof}[Proof Sketch]

There are two cases where the \pursuer{} can deterministically prevent its defeat at $t+1$:
\begin{enumerate}
    \item[(i)] the \pursuer{} captures the \evader{} at the current time step, i.e., $x_t \in \invcapture(y_t)$;
    \item[(ii)] the \pursuer{} survives both stages $t$ and stage $t+1$.
\end{enumerate}
For the second case, survival at time $t$ requires $x_t \in \Preqx(y_t)$. 
To further ensure survival at time $t+1$, the \pursuer{} must reach a state $x_{t+1}$ such that, \emph{for all} possible states $y_{t+1} \in \Rsety(y_t)$ chosen by the \evader{}, the pair $(x_{t+1}, y_{t+1})$ does not result in \pursuer{}'s defeat, i.e., $x_{t+1} \in \Qsetx(y_{t+1}, 0)$. 
This yields the condition
$x_{t+1} \!\in\! \Qplus(y_{t}, 1)$ in~\eqref{eqn:P-req}, where $\Qplus(y_{t}, 1)$ is the set of  desirable next \pursuer{} states.
To reach one of these desirable states, the \pursuer{} must satisfy $x_t \!\in\! \Rinvsetx(\Qplus)$.
%
% A formal proof is presented in Appendix~\ref{appdx-sec:prop-proof}.
\end{proof}

\vspace{-0.15in}
\subsection{$k$-step \dAC{}}
Building upon the analysis for $1$-step \dAC{}, we can generalize the dominance region to arbitrary time horizons using a recursive set-based dynamic programming approach~\cite{shishika2023dynamic}.

\begin{restatable}{theorem}{kth} \label{thm:k_step_dac}
Let $x_t$ and $y_t$ be the current states. The \pursuer{} can guarantee that it does not lose the game within $k$ time steps if and only if $x_t \in \Qsetx(y_t, k)$, which satisfies:
\vspace{-0.05in}
\begin{align} \label{eqn:Q-set-k}
    \Qsetx(y_t,k) &=
\invcapture(y_t) \\[-0.03in]
&\hspace{-0.1in}\midcup{} \Big(\Preqx(y_t) \midcap{} \Rinvsetx \big( 
\midcap{y_{t+1} \in \Rsety(y_t)} \Qsetx(y_{t+1}, k-1)
\big) \Big). \nonumber
\end{align}
\end{restatable}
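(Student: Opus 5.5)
We prove the statement by induction on $k$, with Proposition~\ref{prop:single_step_dac} (and the terminal set~\eqref{eqn:Q-set-0}) as the base. Write $\widetilde{\Qsetx}(y_t,k)$ for the right-hand side of~\eqref{eqn:Q-set-k}. The statement then reads: ``the \pursuer{} can guarantee no defeat within $k$ steps from $(x_t,y_t)$'' holds iff $x_t\in\widetilde{\Qsetx}(y_t,k)$. The case $k=1$ is exactly Proposition~\ref{prop:single_step_dac}, since $\Qsetx(\cdot,0)=\Preqx(\cdot)$. For the inductive step, assume the equivalence for $k-1$ at every \evader{} state, and write $\Qplus(y_t,k)=\bigcap_{y_{t+1}\in\Rsety(y_t)}\Qsetx(y_{t+1},k-1)$.

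\emph{Sufficiency.} Take $x_t\in\widetilde{\Qsetx}(y_t,k)$ and split into the two branches of the union. If $x_t\in\invcapture(y_t)$, the game terminates by capture at time $t$, so no defeat can ever occur. Otherwise $x_t\in\Preqx(y_t)$, so there is no defeat at time $t$, and there exists $u^\star\in\U(x_t)$ with $f(x_t,u^\star)\in\Qplus(y_t,k)$. The \pursuer{} plays $u^\star$ deterministically. Because the dynamics are deterministic and separable, every \evader{} action, randomized or not, produces some $y_{t+1}\in\Rsety(y_t)$. For each such $y_{t+1}$ we have $x_{t+1}\in\Qsetx(y_{t+1},k-1)$, so by the induction hypothesis there is a continuation strategy guaranteeing no defeat over steps $t+1,\dots,t+k$. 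Since the strategies are feedback maps on the joint state, we can concatenate them: apply $u^\star$ at $(x_t,y_t)$ and then the inductive strategy from the realized $(x_{t+1},y_{t+1})$. This gives no defeat with probability one.

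\emph{Necessity.} Take $x_t\notin\widetilde{\Qsetx}(y_t,k)$, so $x_t\notin\invcapture(y_t)$ and either $x_t\notin\Preqx(y_t)$ or $\Rsetx(x_t)\cap\Qplus(y_t,k)=\emptyset$. In the first case $y_t\in\target\setminus\capture(x_t)$, and the \pursuer{} loses immediately. In the second case, every action $u$ leads to some $x'=f(x_t,u)$ with $x'\notin\Qsetx(y'(x'),k-1)$ for at least one $y'(x')\in\Rsety(y_t)$. This is the main obstacle, because the moves are simultaneous. The \evader{} cannot observe $u$, so it cannot simply respond with $y'(x')$.

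To handle this I would use the probability-one requirement. Fix any \pursuer{} strategy $\pi$. Since $\U(x_t)$ is nonempty and $\pi(\cdot\mid x_t,y_t)$ is a distribution on it, some action $\bar u$ has $\pi(\bar u\mid x_t,y_t)=p>0$. Let the \evader{} play the action leading to $y'(f(x_t,\bar u))$. With probability at least $p$, the next joint state is a pair from which, by the induction hypothesis, the \pursuer{} cannot guarantee no defeat within $k-1$ steps. At that pair, for the continuation strategy of $\pi$, the induction hypothesis supplies an \evader{} response that causes defeat with positive probability. Composing the two gives defeat within $k$ steps with positive probability, so $\pi$ does not guarantee a win. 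Note that the counter-strategy may depend on $\pi$, which is acceptable because the guarantee is defined as holding against every \evader{} strategy.

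To make the composition rigorous, I would state the induction hypothesis in this form: from any state outside $\Qsetx$, for every $\pi$ there exists $\sigma$ such that defeat occurs within the horizon with positive probability. I would also check that the capture-first convention is respected at each intermediate step, which holds because $\Preqx$ is intersected in at every level.
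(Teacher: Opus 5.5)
Your proposal is correct and follows the paper's induction; the sufficiency half is essentially the paper's argument. The genuine difference is in necessity. The paper passes directly from ``every $x_{t+1}\in\Rsetx(x_t)$ admits some $y_{t+1}\in\Rsety(y_t)$ with $x_{t+1}\notin\Qsetx(y_{t+1},k-1)$'' to the conclusion. That step tacitly lets the \evader{} pick $y_{t+1}$ after seeing $x_{t+1}$, which the simultaneous-move model does not allow.

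Your step fixes $\pi$, takes an atom $\bar u$ of $\pi(\cdot\mid x_t,y_t)$, lets the \evader{} aim at the bad $y'$ for that one move, and multiplies positive probabilities. This is what actually justifies the conclusion against randomized \pursuer{} strategies. It also makes explicit that the ``only if'' direction rests on the probability-one criterion: the \evader{} can punish only one positive-probability move, which is also why $\S\setminus\Qsetx(y,k)$ is not entirely losing.

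Your ``strengthened'' hypothesis is just the unpacked negation of a guarantee. So you can start the induction at $k=0$, where it is trivial, instead of leaning on Proposition~\ref{prop:single_step_dac}, whose sketch has the same gloss. The paper's version buys brevity; yours buys a complete argument.

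One caveat applies to both your proof and the paper's. $\Pi$ and $\Sigma$ consist of stationary maps on $\S\times\S$, so prefixing a first move to a continuation strategy generally yields a time-dependent strategy. Being a feedback map on the joint state is precisely why concatenation is \emph{not} automatic. The repair uses the nesting $\Qsetx(y,j+1)\subseteq\Qsetx(y,j)$. This holds directly for the recursion, since $\invcapture(y)\subseteq\Preqx(y)$ and the right-hand side of~\eqref{eqn:Q-set-k} is monotone in the sets.

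For sufficiency, let $m(x,y)$ be the largest $j\le k$ with $x\in\Qsetx(y,j)$. Absent capture, move into $\bigcap_{y'\in\Rsety(y)}\Qsetx(y',m(x,y)-1)$; then $x_i\in\Qsetx(y_i,k-i)$ along every play until termination.

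For necessity, let $\ell(x,y)$ be the smallest $j$ with $x\notin\Qsetx(y,j)$. Let the \evader{} target level $\ell(x,y)-1$ against a fixed atom of $\pi(\cdot\mid x,y)$. Then $\ell$ strictly decreases along a positive-probability play and reaches $0$, i.e., immediate defeat, within $k$ steps, so a stationary deterministic \evader{} strategy suffices.
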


\begin{proof}
We proceed by induction on $k$.

\noindent
\underline{\textit{Base Case:}} 
Proposition~\ref{prop:single_step_dac} shows that $x_t \!\in\! \Qsetx(y_t, 1)$ is necessary and sufficient for \pursuer{} to avoid losing at time $t+1$.

\noindent
\underline{\textit{Inductive Step:}} 
Suppose the statement holds for ${k-1}$, we show that it holds for $k$.

\textit{Sufficiency:}  
If $x_t \!\in\!\! \Qsetx(y_t,k)$, then either (i) $x_t\! \in\!\! \invcapture\!(y_t)$, which ensures immediate capture, or 
(ii) $x_t \in \Preqx(y_t) \cap  \Rinvsetx \big(
\cap_{\scriptscriptstyle y_{t+1}\in\Rsety(y_t)}\Qsetx(y_{t+1}, k-1)\big)$. 
For case (ii), $x_t \!\in\! \Preqx(y_t)$ guarantees the \pursuer{}'s survival at the current time step. 
Moreover, $x_t \in \Rinvsetx \big(
\cap_{\scriptscriptstyle y_{t+1}\in\Rsety(y_t)}\Qsetx(y_{t+1}, k-1)\big)$ ensures that the \pursuer{} can select a reachable state $x_{t+1}$ such that for all $y_{t+1} \in \Rsety(y_t)$, we have $x_{t+1} \in \Qsetx(y_{t+1},k-1)$. 
By the inductive hypothesis, this ensures that the \pursuer{} can deterministically avoid defeat for the remaining $k-1$ steps, completing the proof of sufficiency.

\textit{Necessity:} 
Suppose $x_t \notin \Qsetx(y_t,k)$. Then $x_t \notin \invcapture(y_t)$, so capture does not occur at time $t$, and furthermore
\[
x_t \notin \Preqx(y_t) \midcap{}  \Rinvsetx \!\Big(
\mathop{\scalebox{1.8}{$\cap$}}\limits_{\scriptscriptstyle y_{t+1}\in\Rsety(y_t)}
\hspace{-0.15in}
\Qsetx(y_{t+1},k-1)
\Big).
\]
This implies either:
(i) $x_t \notin \Preqx(y_t)$, leading to immediate loss at time $t$; or 
(ii) for every $x_{t+1}\in\Rsetx(x_t)$, there exists some $y_{t+1}\in\Rsety(y_t)$ such that 
$x_{t+1}\notin\Qsetx(y_{t+1},k-1)$. 
By the inductive hypothesis, being in $\Qsetx(y_{t+1},k-1)$ is necessary to avoid losing over the next $k-1$ steps. 
Hence, starting from $x_t \notin \Qsetx(y_t,k)$, the \pursuer{} cannot deterministically avoid defeat within $k$ steps, therefore establishing necessity.
\end{proof}

Since $k$-step dominance necessarily implies $\ell$-step dominance for all $\ell \leq k$, we have the following remark.
\begin{remark}
    The sequence of \dAC{}s forms a non-increasing sequence of sets. Formally, for all $y \in \S$,
    \[
        \Qsetx(y, 0) \supseteq \Qsetx(y, 1) \supseteq \Qsetx(y, 2) \supseteq \cdots.
    \]
\end{remark}

\begin{theorem}
    \label{thm:Q-set-convergence}
    If the state space is finite, i.e., $|\S| < \infty$, then the \dAC{} iteration~\eqref{eqn:Q-set-k} converges in a finite number of steps.  
\end{theorem}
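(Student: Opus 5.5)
The plan is to combine the monotonicity in the Remark with finiteness of the state space, and then use the fact that the recursion~\eqref{eqn:Q-set-k} depends on its input only through the previous layer. Concretely, I would view the whole family $\{\Qsetx(y,k)\}_{y\in\S}$ as a single object: the set $\mathcal{W}_k \triangleq \{(x,y)\in\S\times\S \mid x\in\Qsetx(y,k)\}$. Then~\eqref{eqn:Q-set-k} defines an operator $F$ on subsets of $\S\times\S$ with $\mathcal{W}_k = F(\mathcal{W}_{k-1})$. The key structural point is that $F$ is time-invariant: $\invcapture$, $\Preqx$, $\Rinvsetx$ and $\Rsety$ do not depend on $k$.

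First I would establish the monotonicity $\mathcal{W}_{k}\subseteq\mathcal{W}_{k-1}$ for all $k\ge1$. The Remark asserts this, but I would give a direct justification from the recursion, by induction on $k$. Clearly $F$ is monotone under inclusion: it is built from unions, intersections, and the operator $\Rinvsetx$, and $\Rinvsetx$ preserves inclusion. For the base case, the second branch of $\Qsetx(y,1)$ is intersected with $\Preqx(y)=\Qsetx(y,0)$. Also $\invcapture(y)\subseteq\Qsetx(y,0)$ by~\eqref{eqn:Q-set-0}. Hence $\mathcal{W}_1\subseteq\mathcal{W}_0$. The inductive step then follows by applying $F$ to both sides: $\mathcal{W}_{k}\subseteq\mathcal{W}_{k-1}$ gives $F(\mathcal{W}_{k})\subseteq F(\mathcal{W}_{k-1})$.

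Next comes the finiteness argument. Since $|\S\times\S|<\infty$, the chain $\mathcal{W}_0\supseteq\mathcal{W}_1\supseteq\cdots$ is a non-increasing sequence of subsets of a finite set. Each strict inclusion removes at least one pair, so there can be at most $|\S|^2$ strict decreases. Therefore some $K\le|\S|^2$ exists with $\mathcal{W}_{K+1}=\mathcal{W}_K$.

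Finally, I would show that the chain stays constant from $K$ onward. This is the step needing care, because a single equality does not by itself rule out later decreases. Here time-invariance of $F$ settles it: $\mathcal{W}_{K+2}=F(\mathcal{W}_{K+1})=F(\mathcal{W}_K)=\mathcal{W}_{K+1}$, and inductively $\mathcal{W}_k=\mathcal{W}_K$ for all $k\ge K$. Equivalently, $\Qsetx(y,k)=\Qsetx(y,K)$ for every $y$, so the iteration converges in finitely many steps. The limit $\Qsetinf(y)\triangleq\Qsetx(y,K)$ is then a fixed point of~\eqref{eqn:Q-set-k}.

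The main subtlety I expect is making sure that stalling is detected on the joint object $\mathcal{W}_k$, i.e., for all $y$ simultaneously, rather than per fixed $y$. For a single $y$, the set $\Qsetx(y,k)$ might stay unchanged for one step and then shrink later, because the sets at other $y'$ are still changing. Working with $\mathcal{W}_k$ avoids this problem.
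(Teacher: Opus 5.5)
Your proposal is correct and uses the same idea as the paper, whose proof is simply that a non-increasing sequence of subsets of a finite set must stabilize. Your additions are sound refinements: you derive monotonicity directly from the monotone operator $F$ rather than from the game-theoretic Remark, and you use the $k$-independence of $F$ to show the first stall is permanent, which gives the bound $K\le|\S|^2$ and a valid stopping test. Note, though, that eventual constancy alone (all the theorem claims) already follows from monotonicity and finiteness, even for each fixed $y$ separately.
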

\begin{proof}
    This follows directly from the finiteness of the Q-sets and the fact that they form a non-increasing sequence.
\end{proof}

\subsection{Indefinite \dAC{}}
As Theorem~\ref{thm:Q-set-convergence} ensures finite-step convergence of the Q-sets, we denote the converged Q-sets as $\Qsetinf(y)$, which leads to the following corollary on indefinite survival.

\begin{corollary}
    Given the current states $x_t$ and $y_t$, 
    the \pursuer{} can guarantee a win if and only if $x_t \in \Qsetinf(y_t)$.
\end{corollary}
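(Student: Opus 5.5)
The plan is to prove the two directions separately, using Theorem~\ref{thm:k_step_dac} for necessity and a fixed-point (set-chasing) argument for sufficiency. By Theorem~\ref{thm:Q-set-convergence}, there is a finite $K$ with $\Qsetx(y,k)=\Qsetx(y,K)=\Qsetinf(y)$ for all $k\ge K$ and all $y\in\S$. Because $\S$ is finite, a single $K$ works uniformly over $y$. Substituting into \eqref{eqn:Q-set-k} gives the fixed-point identity
\[
\Qsetinf(y)=\invcapture(y)\cup\Big(\Preqx(y)\cap\Rinvsetx\big(\textstyle\bigcap_{y'\in\Rsety(y)}\Qsetinf(y')\big)\Big).
\]
Both directions will rest on this identity.

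\emph{Necessity.} Suppose $x_t\notin\Qsetinf(y_t)=\Qsetx(y_t,K)$. By Theorem~\ref{thm:k_step_dac}, the \pursuer{} cannot deterministically avoid defeat within $K$ steps. I need the stronger statement that no (possibly randomized) \pursuer{} strategy wins with probability one. The induction in the necessity part of Theorem~\ref{thm:k_step_dac} already yields an \evader{} strategy: at each step, pick a $y_{t+1}$ that witnesses $x_{t+1}\notin\Qsetx(y_{t+1},k-1)$ for the realized $x_{t+1}$.

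The obstacle is that moves are simultaneous, so the \evader{} does not observe $x_{t+1}$ before choosing $y_{t+1}$. I would handle this by letting the \evader{} randomize uniformly over $\Rsety(y_t)$. Then every realized pursuer move $x_{t+1}$ is matched by a bad witness with probability at least $1/|\S|$. Over $K$ steps, the \evader{} reaches a pursuer-loss state with probability at least $|\S|^{-K}>0$. Hence no pursuer strategy wins with probability one, and the dichotomy is strict. This is the delicate step: one must check that the witness chain terminates in an actual evasion rather than merely leaving the $Q$-sets. It does, because the index decreases to $0$, and leaving $\Qsetx(\cdot,0)=\Preqx$ while not being in $\invcapture$ means $y\in\target\setminus\capture(x)$.

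\emph{Sufficiency.} Suppose $x_t\in\Qsetinf(y_t)$. I would construct the feedback strategy directly from the identity. If $x_t\in\invcapture(y_t)$, capture has occurred and the \pursuer{} wins. Otherwise $x_t\in\Preqx(y_t)$, so no evasion occurs at time $t$, and $x_t\in\Rinvsetx(\bigcap_{y'}\Qsetinf(y'))$. The \pursuer{} then picks $u$ with $f(x_t,u)$ in that intersection. This guarantees $x_{t+1}\in\Qsetinf(y_{t+1})$ for every \evader{} response, so the invariant propagates. By induction, evasion never occurs at any time. Therefore, surely, and hence with probability one against any $\sigma$, either capture occurs or the game continues forever without evasion. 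Both outcomes are \pursuer{} wins.
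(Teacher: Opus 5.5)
Your proposal is correct and follows the paper's proof: set-chasing on the fixed-point identity for sufficiency, and, for necessity, reduction to $x_t\notin\Qsetx(y_t,K)$ followed by Theorem~\ref{thm:k_step_dac}. Your uniformly randomizing \evader{} is a worthwhile refinement: it hits a witness with probability at least $1/|\S|$ per step, so it forces evasion with probability at least $|\S|^{-K}$. The paper's appeal to Theorem~\ref{thm:k_step_dac} glosses over the fact that, under simultaneous moves, the \evader{} cannot condition its witness on the realized $x_{t+1}$, while ``guarantee'' means probability one against randomized strategies.
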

    
\begin{proof}
    The limiting sets satisfy the fixed-point relation:
    \[
        \vspace{-0.1in}
        \Qsetinf(y) =
        \invcapture(y) 
        \midcup{} \Big(\Preqx(y) \midcap{} \Rinvsetx \big( 
        \hspace{-0.05in}\midcap{y' \in \Rsety(y)} \hspace{-0.05in}\Qsetinf(y')
        \big) \Big).
        % \vspace{-0.05in}
    \]
    Thus, if $x_t \in \Qsetinf(y_t)$, the \pursuer{} can either capture immediately or move to $x_{t+1} \in \Qsetinf(y_{t+1})$ for all possible $y_{t+1} \in \Rsety(y_t)$.
    Since $\Qsetinf(y_t) \subseteq \Qsetx(y_t, k)$ for any $k \geq 1$, the \pursuer{} can survive for $k$ steps starting at $x_t \in \Qsetinf(y_t)$. 
    Transitioning to $x_{t+1} \in \Qsetinf(y_{t+1}) \subseteq \Qsetx(y_{t+1}, k)$ again guarantees \pursuer{}'s survival for another $k$ steps. 
    Repeating this selection guarantees either capture or indefinite prevention of evasion, both of which constitute a \pursuer{} win.

    Conversely, if $x_t\notin\Qsetinf(y_t)$, then $x_t\notin\Qsetx(y_t,K)$ for some finite $K$.
    By Theorem~\ref{thm:k_step_dac}, the \pursuer{} cannot guarantee avoidance of defeat within $K$ steps and therefore cannot guarantee a win over the infinite horizon.
\end{proof}

Fig.~\ref{fig:QZ-sequence} presents a sequence of the \dAC{} (Q-sets) generated for a specific \evader{} state marked with the red dot. 
States within the \dAC{} are shown in blue.

%%%%%%%%%%%%%%%%%%%%%%%%%%%%%%%%%%%%%%%%%%%%%%%%%%%%%%%%%%%%%
\section{Pursuer's Losing Set and Gray Zone}
Conversely, we also characterize the states from which the \pursuer{} is guaranteed to lose (i.e., the \evader{} reaches an evasion state without being captured) regardless of the \pursuer{}'s strategy.

\begin{restatable}{theorem}{zthm}
\label{thm:z-set}
The \pursuer{} is guaranteed to lose within $k$ time steps if and only if its current state $x_t \in \Zsetx(y_t, k)$, where the Z-sets are defined through recursion:
\begin{subequations}
\begin{align} 
\label{eqn:Z-set-0}
&\Zsetx(y_t, 0) = \big\{x_t \mid y_t \in  \target \setminus \capture(x_{t}) \big\} \\
 \label{eqn:Z-set-k}
&\Zsetx(y_t,k) = \Zsetx(y_t,0) \midcup{} \big( (\S \setminus \invcapture(y_t)) \;\cap \\[-0.05in]
&\hspace{+0.5in} \midcup{y_{t+1}\in \Rsety(y_t)} \hspace{-0.15in}  \{\,x_t \mid \Rsetx(x_t) \subseteq \Zsetx(y_{t+1},k-1)\} \big). \nonumber
\end{align}
\end{subequations}
\end{restatable}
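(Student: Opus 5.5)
We argue by induction on $k$, in the same way as Theorem~\ref{thm:k_step_dac}. Here ``guaranteed to lose within $k$ steps'' means that there is an \evader{} strategy which forces successful evasion at some time in $\{t,\dots,t+k\}$, whatever the \pursuer{} does. Because the dynamics are deterministic and $\Rsety(y_t)$ is finite (or at least the \evader{} picks a single successor), a deterministic \evader{} strategy suffices whenever the \evader{} can force a win. The \pursuer{}'s randomization plays no role, since the \evader{} reacts to the realized joint state at each step. For the base case $k=0$, a loss at time $t$ means exactly $y_t\in\target\setminus\capture(x_t)$, which is~\eqref{eqn:Z-set-0}.

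For the inductive step, assume the claim holds for $k-1$.

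\emph{Sufficiency.} If $x_t\in\Zsetx(y_t,0)$, the \pursuer{} has already lost. Otherwise $x_t\notin\invcapture(y_t)$, so the game does not end at time $t$ by capture. If it ends at $t$ at all, it ends by evasion, which is already the case $\Zsetx(y_t,0)$. Hence the game continues to $t+1$. Moreover there is some $y_{t+1}\in\Rsety(y_t)$ with $\Rsetx(x_t)\subseteq\Zsetx(y_{t+1},k-1)$. The \evader{} commits to the action leading to this $y_{t+1}$. Whatever $u_t$ the \pursuer{} chooses, the resulting state satisfies $x_{t+1}\in\Rsetx(x_t)\subseteq\Zsetx(y_{t+1},k-1)$. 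By the inductive hypothesis, from $(x_{t+1},y_{t+1})$ the \evader{} has a strategy forcing evasion within $k-1$ further steps. Concatenating the two strategies gives evasion within $k$ steps.

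\emph{Necessity.} Suppose $x_t\notin\Zsetx(y_t,k)$. Then $x_t\notin\Zsetx(y_t,0)$, so there is no loss at time $t$. Either $x_t\in\invcapture(y_t)$, in which case capture ends the game in the \pursuer{}'s favour, or for every $y_{t+1}\in\Rsety(y_t)$ there is some $x_{t+1}\in\Rsetx(x_t)$ with $x_{t+1}\notin\Zsetx(y_{t+1},k-1)$.

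This second case is the main obstacle. The moves are simultaneous, so the \pursuer{} does not know $y_{t+1}$ when choosing $u_t$, and the escaping $x_{t+1}$ depends on $y_{t+1}$. Consequently the \pursuer{} cannot necessarily avoid loss deterministically. The claim therefore has to be read as ``the \evader{} cannot force a loss with certainty.'' We would argue this as follows. Let the \pursuer{} randomize uniformly over $\Rsetx(x_t)$, and choose it to be a strategy that also avoids guaranteed loss at later steps. Against any \evader{} choice of $y_{t+1}$, with positive probability the next state satisfies $x_{t+1}\notin\Zsetx(y_{t+1},k-1)$. By the inductive hypothesis, from such a state the \pursuer{} has a (possibly mixed) strategy under which loss within the remaining $k-1$ steps is not certain. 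Composing these shows that the probability of loss within $k$ steps is strictly less than one, so loss is not guaranteed.

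To make the induction consistent, we strengthen the hypothesis to the following statement: for $x\notin\Zsetx(y,k)$, the single fixed \pursuer{} strategy ``play uniformly at every step'' yields, against every \evader{} strategy, a positive probability of avoiding evasion within $k$ steps. We would prove this strengthened form directly, using the positive-probability branch above at each step. A further check is needed here: the resulting probability bound can be taken uniform over \evader{} strategies. This holds because the horizon is finite and the action sets are finite, so each step contributes a factor of at least $1/|\Rsetx(x_t)|$.
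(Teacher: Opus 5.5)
Your proof is correct, and its skeleton matches the paper's: induction on $k$, with sufficiency obtained by having the \evader{} commit to a $y_{t+1}$ with $\Rsetx(x_t)\subseteq\Zsetx(y_{t+1},k-1)$ and then follow the inductive strategy. You go further than the paper in the necessity direction.

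\textbf{Necessity.} The paper's sketch reads the ``only if'' directly off the recursion, in parallel with Theorem~\ref{thm:k_step_dac}. You notice that here the quantifiers are reversed. For each $y_{t+1}$ some $x_{t+1}$ escapes, but which one depends on $y_{t+1}$, and the \pursuer{} cannot see $y_{t+1}$ under simultaneous moves, so no deterministic reply is available. You replace it by one fixed full-support (uniform) \pursuer{} strategy. You then strengthen the hypothesis: this single strategy avoids evasion within $k$ steps with probability at least $N^{-k}$, $N=\max_x|\Rsetx(x)|$, against every \evader{} strategy.

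\textbf{Comparison.} Your version justifies the step the paper leaves implicit. It shows where the probability-one meaning of ``guaranteed'' and mixed strategies enter: if the \evader{} could anticipate a deterministic \pursuer{} strategy, necessity would fail at dilemma states. It also settles both quantifier orders, $\exists\sigma\,\forall\pi$ and $\forall\pi\,\exists\sigma$, at once. The paper's version buys brevity and symmetry with the Q-set theorem.

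\textbf{Two small tidy-ups.} First, your opening remarks need qualifying. That deterministic \evader{} strategies suffice is really a consequence of the theorem, not of the reason you give. That \pursuer{} randomization plays no role is true only in the forcing direction. Nothing depends on either remark, since your necessity argument already covers randomized \evader{}s. Second, your concatenated \evader{} strategy is time-dependent. To land in the paper's stationary class $\Sigma$, let the \evader{} act on $k_{\min}(x,y)=\min\{k\mid x\in\Zsetx(y,k)\}$, which strictly decreases along every play.
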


\begin{proof}[Proof Sketch]
The high-level inductive argument is similar to that of Theorem~\ref{thm:k_step_dac}. 
For $k=0$, \eqref{eqn:Z-set-0} characterizes all \pursuer{} states such that the given \evader{} state $y_t$ lies in the evasion region $\target$ without being captured, which results in an immediate loss for the \pursuer{}. 

For the inductive step, the \evader{} can guarantee the \pursuer{}'s defeat within $k$ steps if and only if either the \pursuer{} has already lost at the current step (i.e., $x_t \in \Zsetx(y_t,0)$), or capture has not occurred and there exists an admissible \evader{} move $y_{t+1}$ such that every reachable \pursuer{} state lies in the corresponding $(k-1)$-step losing set (i.e., $\Rsetx(x_t) \subseteq \Zsetx(y_{t+1},k-1)$).
Taking the union over all admissible \evader{} next states, while excluding current capture states, yields~\eqref{eqn:Z-set-k}.
\end{proof}

Since losing within $k$ steps necessarily implies losing within $\ell$ steps for all $\ell \geq k$, we have the following remark.
\begin{remark}
    The sequence of \pursuer{} losing sets forms a non-decreasing sequence of sets. Formally, for all $y \in \S$,
    \[
        \Zsetx(y, 0) \subseteq \Zsetx(y, 1) \subseteq \Zsetx(y, 2) \subseteq \cdots.
    \]
\end{remark}

Similar to the \dAC{} sets, we also have the following convergence result for the losing sets.
\begin{theorem}
    \label{thm:Z-set-convergence}
    If $|\S| < \infty$, then the losing set iteration~\eqref{eqn:Z-set-k} converges in a finite number of steps. 
\end{theorem}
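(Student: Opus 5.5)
The plan mirrors the argument for Theorem~\ref{thm:Q-set-convergence}, with monotonicity reversed: the losing sets form a non-decreasing sequence of subsets of a finite set, so they must stabilize. The only real work is to check that once the sequence stops growing for one index $k$, it stays constant for every later index. The difficulty is that the iteration~\eqref{eqn:Z-set-k} is indexed jointly over all \evader{} states $y$.

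First, we work with the finite family of sets $\{\Zsetx(y,k)\}_{y\in\S}$, viewed as a single object $Z_k \subseteq \S\times\S$ defined by $(x,y)\in Z_k \iff x\in\Zsetx(y,k)$. By the preceding remark, $Z_0\subseteq Z_1\subseteq\cdots$. Monotonicity can also be verified directly by induction from~\eqref{eqn:Z-set-k}. The base inclusion $Z_0\subseteq Z_1$ holds because $\Zsetx(y,0)$ appears in the union. For the inductive step, the operator
\[
\Phi(W)(y)=\Zsetx(y,0)\cup\big((\S\setminus\invcapture(y))\cap\textstyle\bigcup_{y'\in\Rsety(y)}\{x\mid \Rsetx(x)\subseteq W(y')\}\big)
\]
is monotone in $W$. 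Hence $Z_{k-1}\subseteq Z_k$ implies $Z_k=\Phi(Z_{k-1})\subseteq\Phi(Z_k)=Z_{k+1}$.

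Second, since $|\S\times\S|=|\S|^2<\infty$, the cardinalities $|Z_k|$ form a non-decreasing integer sequence bounded by $|\S|^2$. There is therefore some $K\le|\S|^2$ with $Z_K=Z_{K+1}$. At any index where the sequence does not stabilize, it gains at least one new pair.

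Third, we conclude stabilization. Because $Z_{k+1}=\Phi(Z_k)$ depends only on $Z_k$, the equality $Z_{K+1}=Z_K$ gives $Z_{K+2}=\Phi(Z_{K+1})=\Phi(Z_K)=Z_{K+1}$, and by induction $Z_k=Z_K$ for all $k\ge K$. Thus the iteration converges in at most $|\S|^2$ steps.

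The step needing most care is this last one. It requires that the update depends on the entire family $\{\Zsetx(\cdot,k-1)\}$ and not on a single $y$. A single set $\Zsetx(y,k)$ could stay unchanged for one step and then grow later, once its neighbors grow. This is why we stack all $y$ into $Z_k$ before arguing about stabilization.
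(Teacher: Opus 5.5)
Your proof is correct and follows essentially the same route as the paper, which justifies this result by analogy with Theorem~\ref{thm:Q-set-convergence}: a monotone sequence of subsets of a finite set must stabilize. Your additions refine that argument rather than replace it. You check monotonicity directly from the recursion instead of through the game-theoretic characterization of Theorem~\ref{thm:z-set}, and you use $Z_{k+1}=\Phi(Z_k)$ on the stacked family to show that one stationary step already gives a fixed point, which yields the explicit bound $K\le|\S|^2$ and a valid stopping criterion that the paper's one-line argument leaves implicit.
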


\begin{remark}
    We have $\Qsetx(y, k) \cap \Zsetx(y, k) = \emptyset$ for all $y \in \S$ and $k \in \mathbb{Z}_{\geq 0}$, due to the necessity and sufficiency of the Q-sets and Z-sets.
\end{remark}

An example sequence of Z-sets is also presented in Fig.~\ref{fig:QZ-sequence}. 

\subsection{Gray Zone}
\label{sec:gray-set}

The dominance region $\Qsetx(y,k)$ and the losing set $\Zsetx(y,k)$ rigorously characterize the states from which the Pursuer can either guarantee not losing or is certain to fail.
However, due to the simultaneous-move structure of the discrete pursuit–evasion game, there often exist states from which neither victory nor defeat can be guaranteed with certainty.

We formally define this probabilistic ``gray zone'' as the complement of the two $k$-step guaranteed sets:
\begin{equation}
    \Gset(y_t, k) = \mathcal{S} \setminus \big( \Qsetx(y_t, k) \cup \Zsetx(y_t, k) \big).
\end{equation}

If $x_t\in\Gset(y_t,k)$, the \pursuer{} cannot guarantee that it avoids defeat over the next $k$ steps, while the \evader{} cannot guarantee successful evasion within the $k$ steps.
The infinite-horizon gray zone is defined as
\[
\Gset_\infty(y)=\S\!\setminus\!\bigl(\Qsetinf(y)\cup \Zsetx_\infty(y)\bigr).
\]

Specifically, when the \pursuer{} is in the gray zone, both agents face a \emph{dilemma} either currently or at a future time step. 
Consider the gray region in Fig.~\ref{fig:QZ-sequence} with $k=11$ and a scenario illustrated in Fig.~\ref{fig:gray-zone-dilemma}.
At the current time step, the \pursuer{} lies in the gray zone. 
Over the next two steps, the \evader{} and \pursuer{} move toward points A and B, respectively, and the corresponding Q/Z-sets and gray region are shown. 
The \evader{} has two possible escape options: the bottom-left or bottom-right evasion state. 
To intercept successfully, the \pursuer{} must exploit its diagonal motion and follow a shortest path. 
However, the L-shaped obstacle forces the \pursuer{} to commit to a direction. 
Since the capture timing is tight, once committed, the \pursuer{} cannot adjust if the \evader{} selects the other escape route. 
Similarly, the \evader{} cannot switch without being captured. 
Consequently, the outcome is determined almost entirely by this single decision: if both agents choose the same evasion state, the \pursuer{} wins; otherwise, the \evader{} succeeds.
Since a competitive game is being played, the agents may naturally randomize their decisions, which leads to probabilistic outcomes. 

\begin{figure}[t]
    % \vspace{-0.1in}
    \centering
    \includegraphics[width=\linewidth]{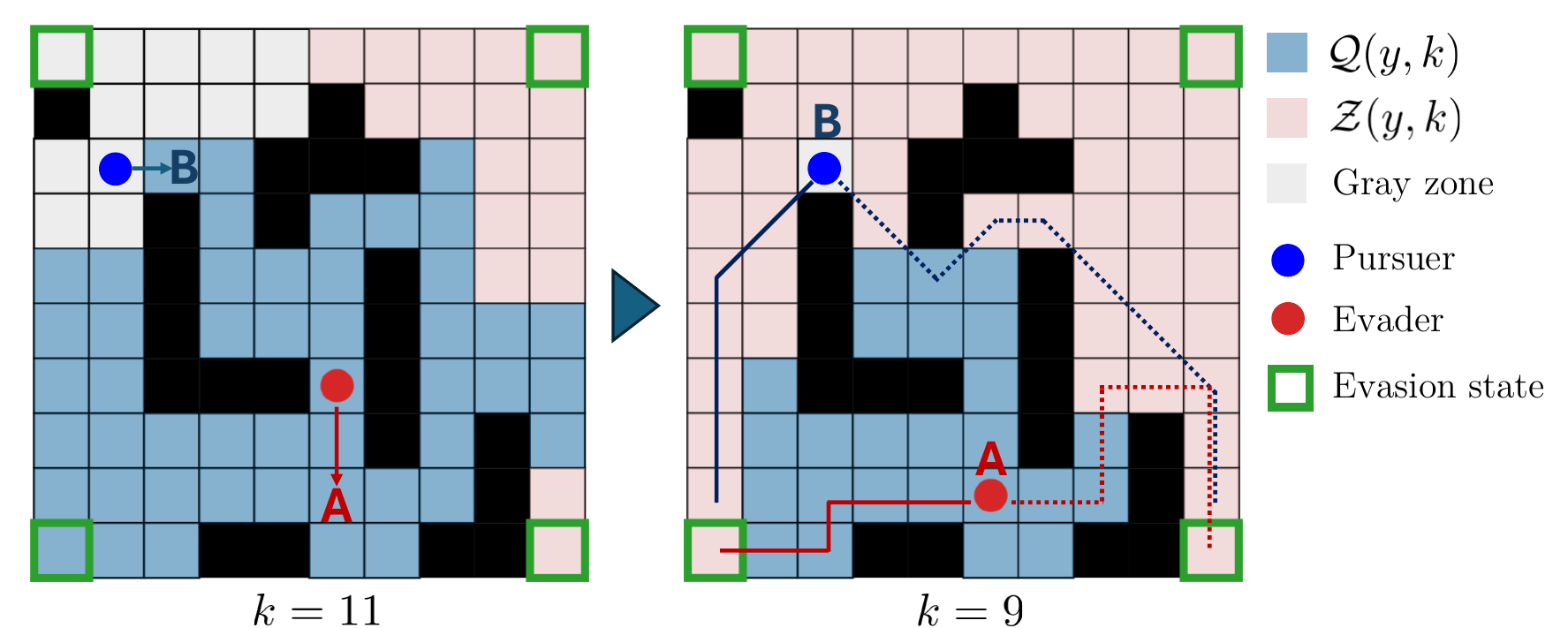}
    \vspace{-0.3in}
    \caption{An illustration of the dilemma the \pursuer{} faces in the gray zone.} 
    \label{fig:gray-zone-dilemma}
    \vspace{-0.25in}
\end{figure}

%%%%%%%%%%%%%%%%%%%%%%%%%%%%%%%%%%%%%%%%%%%%%%%%%%%%%%%%%%%%%
\section{Feedback Policies From the Dominance Sets}
In this section, we construct \emph{feedback} policies based on the Q-sets and Z-sets. 
Given the current state pair $(x_t, y_t)$, three cases arise:
\begin{enumerate}
    \item[(a)] Pursuer dominance: $x_t \in \Qsetinf(y_t)$; the \pursuer{} can capture or indefinitely prevent evasion;
    \item[(b)] Evader dominance: $x_t \in \Zsetx(y_t, k)$ for some $k$; the \evader{} is guaranteed to evade within $k$ steps;
    \item[(c)] Gray zone: $x_t \in \Gset_\infty(y_t)$; the outcome is undetermined/probabilistic.
\end{enumerate}

\noindent
We first examine the first two cases, which admit deterministic outcomes.

\subsection{Pursuer Dominance Region}
Under case (a), the \pursuer{} is guaranteed to win by capture or by preventing evasion indefinitely. 
Given the current state pair $(x_t, y_t)$, the \pursuer{} should simply select its next state
\begin{equation}
    \label{eqn:pursuer-strategy}
    x_{t+1} \in \Rsetx(x_t) \cap \midcap{y_{t+1} \in \Rsety(y_t)} \Qsetinf(y_{t+1}),
\end{equation}
so that the next \pursuer{} state lies in the corresponding Q-set, regardless of the action taken by the \evader{}. 
Such a state $x_{t+1}$ is always reachable from $x_t$ due to the reachability construction, in particular the use of the $\Rinvsetx$ operator in~\eqref{eqn:Q-set-k}.
The resulting \emph{feedback} strategy can be interpreted as ``set chasing", where the \pursuer{} continually chases a set of desirable next states given the observed \evader{} state $y_t$.

Note that the above feedback strategy does not prioritize capturing the \evader{}, even when capture can be guaranteed. 
To encode a preference for capture over indefinite evasion prevention, one may select $x_{t+1}$ in~\eqref{eqn:pursuer-strategy} so as to reduce the distance to the current state $y_t$.

Since we consider a game-of-kind formulation, both capture and no evasion constitute equivalent losing outcomes for the \evader{}. 
Therefore, under case (a), the \evader{} is guaranteed to lose regardless of its strategy. 
In practice, the \evader{} may adopt heuristic strategies such as moving towards the nearest evasion state, or towards a state in the Z-set, or simply maintaining distance from the \pursuer{}.

\subsection{Evader Dominance Region}
Case (b) guarantees that the \evader{} evades within $k$ steps. 
To minimize the time to evasion, the \evader{} first computes
\[
k_{\mathrm{min}} = \min \{k \mid x_t \in \Zsetx(y_t, k)\}.
\]
To make progress, the \evader{} targets the $(k_{\mathrm{min}}\!-\!1)$-step Z-set and selects its next state from the set
\begin{equation}
    \label{eqn:evader-strategy}
    \mathcal{Z}^+(y_t, k_{\mathrm{min}})\! 
    = \!\{y' \!\in \!\Rsety(y_t) \vert \Rsetx(x_t) \!\subseteq\! \Zsetx(y', k_{\mathrm{min}}\!-\!1)\}.
\end{equation}
By choosing any $y_{t+1} \in \mathcal{Z}^+(y_t, k_{\mathrm{min}})$, the \evader{} ensures that all possible \pursuer{} responses lie in $\Zsetx(y_{t+1}, k_{\mathrm{min}}-1)$, guaranteeing evasion within $k_{\mathrm{min}}-1$ steps.

Although the \pursuer{} is destined to lose within $k_{\mathrm{min}}$ steps, it can attempt to delay this outcome. 
In particular, the \pursuer{} should \emph{avoid} transitions to $x_{t+1} \in \Zsetx(y_{t+1}, k_{\mathrm{min}}-2)$, which would reduce the remaining survival time by two steps rather than just one. 
Since the \pursuer{} does not know the \evader{}'s choice of $y_{t+1}$, it should select $x_{t+1}$ to minimize the number of $y_{t+1} \in \mathcal{Z}^+(y_t, k_{\mathrm{min}})$ for which $x_{t+1} \in \Zsetx(y_{t+1}, k_{\mathrm{min}}-2)$.

\begin{figure*}[!t]
    \centering
    \includegraphics[width=\linewidth]{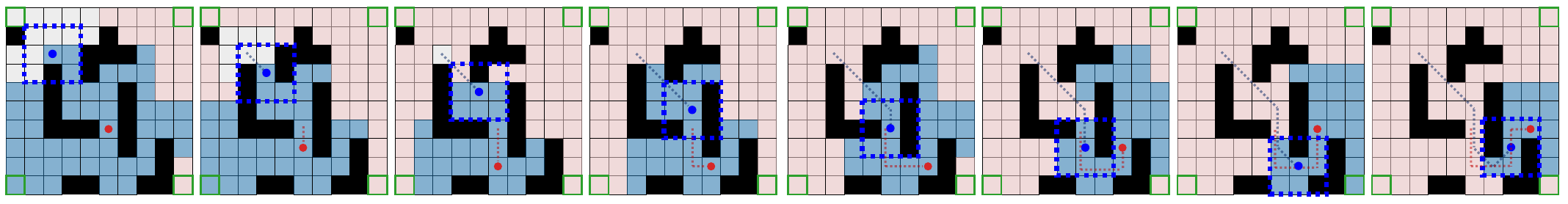}
    \vspace{-0.3in}
    \caption{\hspace{-0.1in} Game trajectories and the corresponding dominance sets at each \evader{} state. 
    The blue \pursuer{} remains within the Q-set and captures the \evader{}.}
    \label{fig:capture-sequence}
    \vspace{-0.2in}
\end{figure*}

\subsection{Gray Zone}

When $x_t \in \Gset_\infty(y_t)$, neither the \pursuer{} nor the \evader{} can guarantee their victory. 
This arises because some future joint states favor the \pursuer{}, while others favor the \evader{}, as shown in Fig.~\ref{fig:gray-zone-dilemma}.
Since the next state is jointly determined by both agents, the outcome becomes inherently uncertain.

In this case, we adopt a zero-sum matrix game formulation and allow both agents to randomize. 
Let the \pursuer{}'s feasible next states be indexed by $i$, where $x^{(i)}_{t+1} \in \Rsetx(x_t)$, and the \evader{}'s next states be indexed by $j$, where $y^{(j)}_{t+1} \in \Rsety(y_t)$. 
We construct a payoff matrix $M$, where each entry $M_{ij}$ evaluates the outcome of the next joint state $(x^{(i)}_{t+1}, y^{(j)}_{t+1})$.

Since states in the Q-set yield a winning probability of one for the \pursuer{}, while states in the Z-set yield zero, we define the payoff matrix as
\begin{equation}
    \label{eqn:payoff_matrix}
    M_{ij} = 
    \begin{cases} 
      1 & \text{if } x^{(i)}_{t+1} \in \Qsetinf\big(y^{(j)}_{t+1}\big), \\
      0 & \text{if } x^{(i)}_{t+1} \in \Zsetx\big(y^{(j)}_{t+1}, k\big)\ \text{for some } k, \\
      r & \text{otherwise},
    \end{cases}
\end{equation}
where $r \in \{0, 0.5, 1\}$ reflects different levels of optimism, as discussed below.

\begin{itemize}[leftmargin=1em]
    \item \textit{Optimistic}: $r=1$, where the \pursuer{} assumes that transitioning into the gray zone will always result in its victory;
    \item \textit{Pessimistic}: $r=0$, where the \pursuer{} assumes that entering the gray zone will always result in its defeat;
    \item \textit{Neutral}: $r=0.5$, where the \pursuer{} assumes a 50/50 chance in the gray zone.
\end{itemize}

We assume that the \pursuer{} maximizes the payoff while the \evader{} minimizes it, which leads to the following max-min optimization problem:
\begin{equation}
    \label{eqn:max-min-M}
    \max_{\pi \in \Delta_{|\Rsetx(x_t)|}}
    \min_{\sigma \in \Delta_{|\Rsety(y_t)|}}
    \bm{\pi}^\top M \bm{\sigma}, 
\end{equation}
where $\Delta_n$ denotes the $n$-dimensional simplex. 
The above max-min problem can be solved via the following linear program~\cite{owen2013game}:
\begin{alignat}{3}  
    \label{eqn:LP}
    \max_{\bm{\pi}, v} ~~ &~~ v \\
    \text{subject to} ~~ 
    & \sum_{i}[\bm{\pi}]_i M_{ij} \ge v, \quad && \forall\, j \text{ with } y^{(j)}_{t+1} \in \Rsety(y_t), \nonumber \\
    & \bm{1}^\top \bm{\pi} = 1, ~~ \bm{\pi} \ge 0. &&  \nonumber
\end{alignat}
The optimal solution $\bm{\pi}^*$ is a probability distribution over the \pursuer{}'s reachable set $\Rsetx(x_t)$, and its element $[\bm{\pi}^*]_{i}$ gives the probability of selecting $x^{(i)}_{t+1}$ as the next \pursuer{} state. 
The \evader{}'s optimal strategy can be obtained from the dual variables of~\eqref{eqn:LP}.

\subsection{Bounds on Gray-Zone Winning Probability}
To properly bound the winning probability in the gray zone, we restrict our attention to \emph{dominance-rational} strategies, where agents act optimally within the dominance sets.

\begin{definition}
A \pursuer{} strategy is \emph{dominance-rational} if it guarantees a win whenever $x_t \in \Qsetinf(y_t)$. 
Similarly, an \evader{} strategy is \emph{dominance-rational} if it guarantees the \pursuer{}'s defeat whenever $x_t \in \Zsetx(y_t, k)$ for some $k$.
\end{definition}

Dominance rationality can be viewed as a necessary condition for optimal or equilibrium strategies.

Note that the entries of matrix $M$ capture the estimated winning probability of the \pursuer{} at each possible next state. 
The optimal value $v^*$ from~\eqref{eqn:LP} therefore corresponds to the equilibrium winning probability of the \pursuer{}, assuming that the estimates in $M$ are accurate.
While the entries $1$ and $0$, corresponding to the Q-sets and Z-sets, are exact under dominance-rational strategies, the entries involving $r$ approximate the outcome in the gray zone. 
Nevertheless, the true winning probability always lies in $[0,1]$. 

Let $M_{\mathrm{true}}$ denote the accurate estimate under an arbitrary pair of dominance-rational strategies. We have the following element-wise inequality
\[
    M_{\pess} \leq M_{\true} \leq M_{\optism},
\]
where $M_{\pess}$ corresponds to the pessimistic case with $r=0$ and $M_{\optism}$ corresponds to the optimistic case with $r=1$. 

\begin{lemma}
\label{lemm:value-inequality}
Let $v^*_{\pess}$, $v^*_{\true}$, and $v^*_{\optism}$ be the optimal values of~\eqref{eqn:LP} corresponding to $M_{\pess}$, $M_{\true}$, and $M_{\optism}$, respectively. 
Then, we have
\[
    v^*_{\pess} \leq v^*_{\true} \leq v^*_{\optism}.
\]
\end{lemma}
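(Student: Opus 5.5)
The plan is to prove monotonicity of the max-min value of the matrix game \eqref{eqn:max-min-M} with respect to element-wise ordering of the payoff matrix, and then apply it twice. The three matrices share the same index sets: rows correspond to $\Rsetx(x_t)$ and columns to $\Rsety(y_t)$. Consequently, the feasible sets $\Delta_{|\Rsetx(x_t)|}$ and $\Delta_{|\Rsety(y_t)|}$ are identical in all three problems. Only the payoff entries differ, and by the preceding discussion they satisfy $M_{\pess} \le M_{\true} \le M_{\optism}$ element-wise. This holds because Q-set and Z-set entries are exactly $1$ and $0$ under dominance-rational strategies, while gray-zone entries lie in $[0,1]$ and are replaced by $r=0$ or $r=1$.

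The key step is to show the following claim. If $A \le B$ element-wise, then $v^*(A) \le v^*(B)$, where $v^*(\cdot)$ denotes the optimal value of \eqref{eqn:LP}. Fix any $\bm{\pi}\in\Delta_{|\Rsetx(x_t)|}$ and $\bm{\sigma}\in\Delta_{|\Rsety(y_t)|}$. Since all components of $\bm{\pi}$ and $\bm{\sigma}$ are nonnegative,
\[
\bm{\pi}^\top B \bm{\sigma} - \bm{\pi}^\top A \bm{\sigma} = \sum_{i,j} [\bm{\pi}]_i (B_{ij}-A_{ij}) [\bm{\sigma}]_j \ge 0 .
\]
Taking the minimum over $\bm{\sigma}$ preserves the inequality, so $\min_{\bm{\sigma}} \bm{\pi}^\top A\bm{\sigma} \le \min_{\bm{\sigma}} \bm{\pi}^\top B\bm{\sigma}$ for every $\bm{\pi}$. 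Taking the maximum over $\bm{\pi}$ then preserves it again. The extrema exist because the simplices are compact and the objective is continuous.

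Equivalently, I would phrase the argument directly in the LP \eqref{eqn:LP}. Any feasible pair $(\bm{\pi}, v)$ for the LP with matrix $A$ satisfies $\sum_i [\bm{\pi}]_i B_{ij} \ge \sum_i [\bm{\pi}]_i A_{ij} \ge v$ for every $j$. Hence $(\bm{\pi}, v)$ is also feasible for the LP with matrix $B$, so the feasible set only grows and the maximum cannot decrease. This also confirms that the LP value coincides with the max-min value, which is the standard equivalence cited from~\cite{owen2013game}.

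Applying the claim with $(A,B)=(M_{\pess},M_{\true})$ and then with $(A,B)=(M_{\true},M_{\optism})$ yields $v^*_{\pess}\le v^*_{\true}\le v^*_{\optism}$. No step is genuinely hard. The only point requiring care is justifying the element-wise ordering $M_{\pess}\le M_{\true}\le M_{\optism}$, which the paper states just before the lemma. I would briefly recall why it holds. Entries indexed by Q-set states are exactly $1$ in all three matrices, because a dominance-rational \pursuer{} wins from there. Entries indexed by Z-set states are exactly $0$ in all three, because a dominance-rational \evader{} wins from there. Every remaining entry of $M_{\true}$ is a probability, hence lies between $0$ and $1$.
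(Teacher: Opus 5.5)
Your proof is correct and takes essentially the paper's approach: the elementwise ordering gives $\bm{\pi}^\top A\bm{\sigma}\le\bm{\pi}^\top B\bm{\sigma}$, which survives the min over $\bm{\sigma}$ and the max over $\bm{\pi}$ (the paper does the last step by evaluating the larger game at an optimal policy of the smaller one), and this is applied twice to $M_{\pess}\le M_{\true}\le M_{\optism}$. One small quibble: your LP-feasibility variant proves monotonicity of the LP value directly but does not by itself show that the LP value equals the max-min value; that equivalence is a separate standard fact you should cite rather than present as confirmed.
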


\begin{proof}
We show that if $A, B\!\in\! \mathbb{R}^{n\times m}$ satisfy $A\!\geq\! B$ elementwise, then we have $v^*_A \geq v^*_B$, where
\begin{equation*}
    v^*_A = \max_{\pi \in \Delta_n} \min_{\sigma \in \Delta_m} \pi^\top A \sigma, 
    \quad
    v^*_B = \max_{\pi \in \Delta_n} \min_{\sigma \in \Delta_m} \pi^\top B \sigma.
\end{equation*}
Since $A \geq B$ elementwise and $\pi,\sigma \geq 0$ are probability vectors, we have
\(
    \pi^\top A \sigma \geq \pi^\top B \sigma
\)
for all $(\pi,\sigma) \in \Delta_n \times \Delta_m$.

Let $\pi_B^* \in \Delta_n$ be an optimal row policy for game $B$, so that
\(
    \min_{\sigma \in \Delta_m} (\pi_B^*)^\top B \sigma = v_B^*.
\)
Since $v_A^*$ is the maximum worst-case payoff over all $\pi \in \Delta_n$, evaluating game $A$ at the specific policy $\pi_B^*$ yields
\begin{align*}
    v_A^*
    &= \max_{\pi \in \Delta_n} \min_{\sigma \in \Delta_m} \pi^\top A \sigma
    \geq \min_{\sigma \in \Delta_m} (\pi_B^*)^\top A \sigma \\
    &\geq \min_{\sigma \in \Delta_m} (\pi_B^*)^\top B \sigma 
    = v_B^*.
\end{align*}
Therefore, $v_A^* \geq v_B^*$.
\end{proof}

Define the \pursuer{}'s optimal worst-case winning probability as
\[
 V(x_t,y_t) = \sup_{\pi \in \Pi} \inf_{\sigma \in \Sigma} \mathbb{P}_{\pi, \sigma}\big( \text{Pursuer wins} \mid x_t, y_t\big).
\]

Since the winning probability in the gray zone lies in $[0,1]$, which corresponds to the $M_\pess$ and $M_\optism$ matrices, we have the following bounds.

\begin{corollary}
\label{cor:bounds}
The pessimistic and optimistic matrix games provide lower and upper bounds on $V(x_t,y_t)$, respectively. 
Specifically, at any joint state $(x_t,y_t)$ in the gray zone,
\[
v_{\mathrm{pess}}^\ast
\le V(x_t,y_t)
\le v_{\mathrm{optsm}}^\ast.
\]
\end{corollary}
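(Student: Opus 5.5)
The plan is to compare $V$ with the value of the one-step matrix game whose entries are the exact continuation values. We will show that $V(x_t,y_t)=v^*_{\true}$ when $M_{\true}$ is taken to be the matrix of continuation values $[M_{\true}]_{ij}=V(x^{(i)}_{t+1},y^{(j)}_{t+1})$. The corollary then follows from Lemma~\ref{lemm:value-inequality} together with the elementwise sandwich $M_{\pess}\le M_{\true}\le M_{\optism}$.

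First, we fix the entries of $M_{\true}$ and check the sandwich. If $x^{(i)}_{t+1}\in\Qsetinf(y^{(j)}_{t+1})$, the Corollary on indefinite survival shows the \pursuer{} can guarantee a win from the next joint state, so $V=1$ there. If $x^{(i)}_{t+1}\in\Zsetx(y^{(j)}_{t+1},k)$ for some $k$, Theorem~\ref{thm:z-set} shows the \evader{} can force defeat, so $V=0$. Otherwise $V\in[0,1]$ trivially. Hence $M_{\pess}\le M_{\true}\le M_{\optism}$ elementwise, with exact agreement on the $0$/$1$ entries. Dominance rationality is what makes these $0$/$1$ entries exact.

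Second, we establish the one-step Bellman (Shapley) relation
\[
V(x_t,y_t)=\max_{\bm{\pi}}\min_{\bm{\sigma}}\bm{\pi}^\top M_{\true}\bm{\sigma}=v^*_{\true}.
\]
Since $(x_t,y_t)$ is in the gray zone, $x_t\notin\invcapture(y_t)$ and $x_t\notin\Zsetx(y_t,0)$, so the game is nonterminal at time $t$ and proceeds to a next joint state chosen by simultaneous moves.

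For the lower bound $V\ge v^*_{\true}$, the \pursuer{} plays the matrix-game optimal $\bm{\pi}^*$ at time $t$. It then continues with an $\varepsilon$-optimal strategy from whatever next state is realized. Conditioning on the first move gives a winning probability of at least $v^*_{\true}-\varepsilon$ against any \evader{} strategy, and letting $\varepsilon\to 0$ gives the bound.

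For the upper bound $V\le v^*_{\true}$, the \evader{} plays its minimax $\bm{\sigma}^*$ at time $t$. Afterwards it plays an $\varepsilon$-optimal response from the realized next state, holding the \pursuer{} to at most $V(x',y')+\varepsilon$ there. Because both players observe the joint state and strategies are Markov in $(x,y)$, this continuation is well defined, and the same conditioning argument applies.

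This second step is the main obstacle. For the infinite-horizon game of kind, the existence of a value and the legitimacy of concatenating $\varepsilon$-optimal continuations need care. I would argue via the general theory of finite-state stochastic games with reachability objectives, where a value exists and satisfies the Shapley fixed point. Alternatively, to avoid relying on that theory, I would use only the two inequalities actually needed.

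The first needed inequality is $V\ge v^*_{\pess}$. The \pursuer{} plays $\bm{\pi}^*_{\pess}$ and then its dominance-rational strategy. With probability at least $v^*_{\pess}$, the realized next state lies in $\Qsetinf$, and from there it wins surely.

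The second needed inequality is $V\le v^*_{\optism}$. The \evader{} plays $\bm{\sigma}^*_{\optism}$ and then its dominance-rational strategy. With probability at least $1-v^*_{\optism}$, the realized next state lies in some $\Zsetx(\cdot,k)$, and from there the \pursuer{} loses surely.

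This direct route bypasses value existence entirely, so I would present it as the primary argument. Lemma~\ref{lemm:value-inequality} then serves only to place the heuristic $v^*_{\true}$ between the two bounds.
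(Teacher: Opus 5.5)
Your primary (direct) argument is correct, and it takes a genuinely different route from the paper.

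The paper offers no separate proof. It justifies the corollary by the elementwise sandwich $M_{\pess}\le M_{\true}\le M_{\optism}$ together with Lemma~\ref{lemm:value-inequality}, and it tacitly identifies $V(x_t,y_t)$ with $v^*_{\true}$ (``assuming that the estimates in $M$ are accurate''). That identification is exactly the one-step Shapley identity you flag as the main obstacle, and the paper never establishes it.

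Your route avoids it. You note that $M_{\pess}$ is the indicator matrix of the event $x_{t+1}\in\Qsetinf(y_{t+1})$, and $M_{\optism}$ is the indicator of $x_{t+1}\notin\Zsetx_\infty(y_{t+1})$. Then $\bm{\pi}^*_{\pess}$ secures the first event with probability at least $v^*_{\pess}$ against every \evader{} mixture. By the minimax theorem, $\max_i[M_{\optism}\bm{\sigma}^*_{\optism}]_i=v^*_{\optism}$, so $\bm{\sigma}^*_{\optism}$ caps the second event at $v^*_{\optism}$ against every \pursuer{} mixture. Set chasing and the $\mathcal{Z}^+$ rule then turn these events into sure wins and sure losses.

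What this buys: your argument needs only the indefinite-survival corollary and Theorem~\ref{thm:z-set}. It dispenses with value existence and with Lemma~\ref{lemm:value-inequality}, and it exhibits explicit strategies certifying each bound. What it does not give is the paper's conditional picture of $v^*_{\true}$ sitting between the two bounds.

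Do state explicitly that your composite strategies belong to the stationary classes $\Pi$ and $\Sigma$. They do, because $(x_t,y_t)$ lies outside $\Qsetinf(y_t)\cup\Zsetx_\infty(y_t)$, so the first-step prescription never clashes with the dominance-rational rule. Any return to $(x_t,y_t)$ is harmless, since each inequality uses only the first transition.

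The $\varepsilon$-optimal splicing in your Shapley route lacks this property in general, because different continuations can prescribe different actions at a shared state. Demoting that route is the right call.
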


%%%%%%%%%%%%%%%%%%%%%%%%%%%%%%%%%%%%%%%%%%%%%%%%%%%%%%%%%%%%%

\section{Numerical Experiments}
For clarity of exposition, we present results on the grid environment shown in Fig.~\ref{fig:QZ-sequence}. 
The \pursuer{} has 8-way mobility and a $3\!\times\!3$ capture region, whereas the \evader{} has 4-way mobility.
The proposed dominance set analysis and the set-based dynamic program, however, apply to arbitrary graphs and irregular \pursuer{} capture regions.

All results are evaluated with 1,000 trials with a maximum horizon of $50$ steps; if the \evader{} fails to reach an evasion state within this horizon, the \pursuer{} is declared the winner.

\subsection{Dominance Regions}

We first revisit the dilemma example introduced earlier in Fig.~\ref{fig:dilemma}. 
Fig.~\ref{fig:state-pos}a shows the corresponding Q-set and Z-set, illustrating that the proposed set-based dynamic programming approach correctly identifies the given \pursuer{} state (blue point) as a losing state. 

\begin{figure}[b]
    \vspace{-0.1in}
    \centering
    \includegraphics[width=\linewidth]{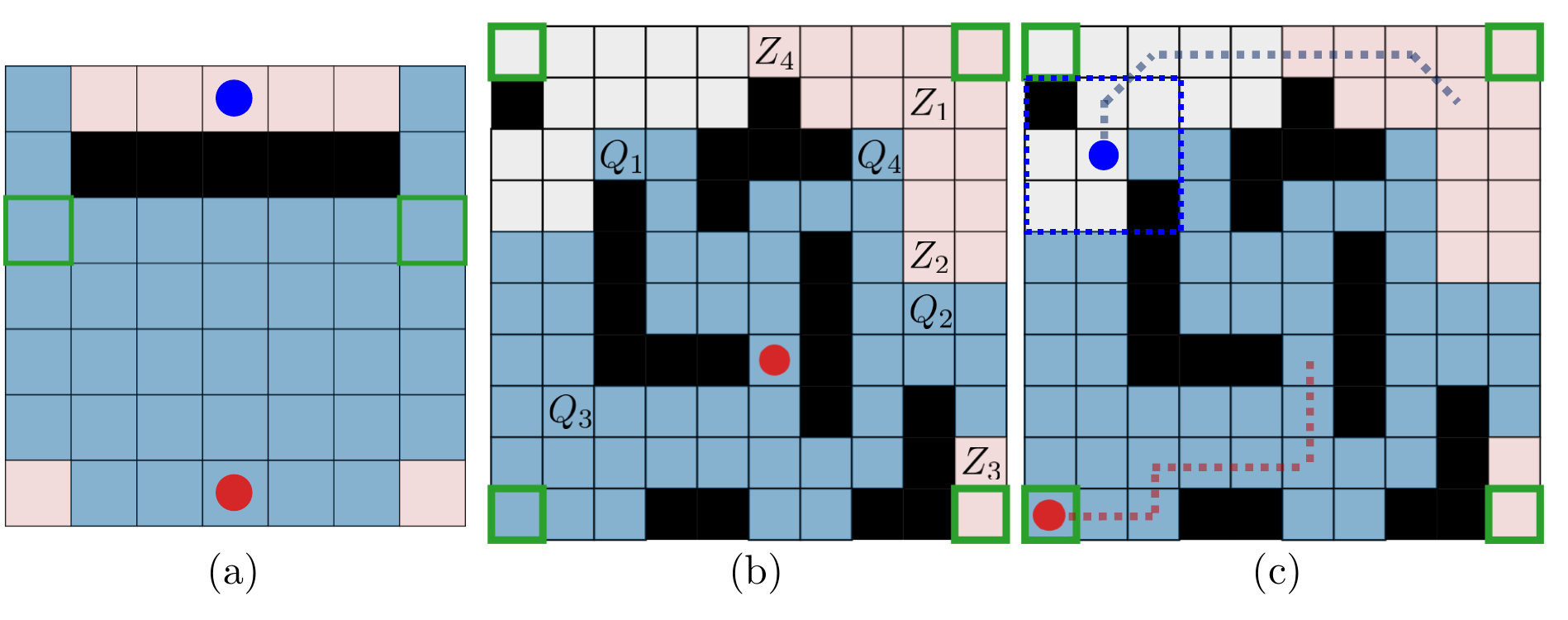}
    \vspace{-0.3in}
    \caption{
    {(a)} Visualization of Q/Z-sets for the dilemma example. 
    {(b)} Pursuer states used for performance evaluations. 
    {(c)} Game trajectory with the \pursuer{} starting from Z-set.}
    \label{fig:state-pos}
    \vspace{-0.08in}
\end{figure}

Fig.~\ref{fig:capture-sequence} illustrates the trajectories of the \pursuer{} and the \evader{}, with the \pursuer{} initially located in the Q-set (its dominance region), where its victory is guaranteed. 
In this scenario, we let the \evader{} follow a shortest path toward the bottom-right evasion state. 
The corresponding dominance sets at each \evader{} state are shown, illustrating that the \pursuer{} ``chases" the Q-sets, remains within the Q-sets at each time step, and ultimately captures the \evader{}.

Next, we compare the proposed policies derived from the dominance sets with Nash equilibrium (NE) policies. 
The NE policies are computed by treating the pursuit-evasion game as a stochastic game with a reward structure that assigns $1$ for capture, $-1$ for successful evasion, and $0$ otherwise. 
Under this zero-sum formulation, the \pursuer{} maximizes this reward while the \evader{} minimizes it, 
and we compute an NE via value iteration~\cite{guan2022hierarchical}.

Importantly, NE policies are dominance-rational, i.e., whenever a winning opportunity exists, the agents exploit it optimally. 
As such, they provide a strong benchmark, acting optimally within the dominance regions and achieving the best performance against a fully rational opponent under worst-case assumptions%
\footnote{
    The NE solution concept is inherently a game-of-degree formulation and thus differs from our game-of-kind formulation; nevertheless, it remains a natural and strong candidate for benchmarking.
}.
This makes NE particularly suitable for validating the dominance-region analysis and for estimating the true winning probabilities in the gray zone.

We evaluate the performance of different strategies at eight initial \pursuer{}  states, while the \evader{} is fixed at the red dot as shown in Fig.~\ref{fig:state-pos}b. 
Specifically, states $Q1$--$Q4$ lie in the Q-set, where an optimal \pursuer{} is expected to achieve a $100\%$ win rate. 
Similarly, states $Z1$--$Z4$ lie in the Z-set, where the \evader{} is guaranteed to evade. 

\begin{table}[t]
\vspace{+0.05in}
\centering
\caption{Simulated Pursuer Win Rate in Dominance Regions}
\vspace{-0.05in}
\setlength{\tabcolsep}{3.5pt}
\begin{tabular}{cc cc}
\toprule
\multicolumn{2}{c}{\textbf{$Q$ States}} & \multicolumn{2}{c}{\textbf{$Z$ States}} \\
\cmidrule(lr){1-2} \cmidrule(lr){3-4}
State & \pursuer{} Win Rate
& State & \pursuer{} Win Rate \\
\midrule
    $Q1$ & $1.0$ & $Z1$ & $0.0$  \\
    $Q2$ & $1.0$ & $Z2$ & $0.0$  \\
    $Q3$ & $1.0$ & $Z3$ & $0.0$  \\
    $Q4$ & $1.0$ & $Z4$ & $0.0$  \\
\bottomrule
\end{tabular}
\label{table:QZ-results}
\vspace{-0.15in}
\end{table}

Table~\ref{table:QZ-results} reports the \pursuer{} winning probabilities for the $Q$- and $Z$-states. 
For $Q$-states, we evaluate the Q-set-based \pursuer{} strategy against a Nash \evader{} strategy to avoid bias toward a specific opponent. 
Similarly, we evaluate the Z-set-based \evader{} strategy against a Nash \pursuer{} strategy for the $Z$-states.
The simulation results are fully consistent with the dominance-region analysis. 
In particular, $x_t \in \Qsetinf(y_t)$ guarantees the \pursuer{}'s victory, whereas $x_t \in \Zsetx(y_t, k)$ guarantees evasion and thus the \pursuer{}'s loss.

Next, we examine the \pursuer{} state $Z1$ marked in Fig.~\ref{fig:state-pos}b. 
The $k\!=\!8$ panel of  Fig.~\ref{fig:QZ-sequence} shows that $Z1 \in \Zsetx(y_t, 8)$, implying guaranteed evasion within eight steps. 
To validate this, we plot a sample trajectory of a Nash \pursuer{} against a Z-set-based \evader{} in Fig.~\ref{fig:state-pos}c. 
Consistent with the theory, the \evader{} successfully evades within 8 steps, even against the optimal Nash \pursuer{}.

\subsection{Gray Zone}
For the gray zone, we examine the scenario in Fig.~\ref{fig:gray-zone-dilemma} and analyze the LP solution from~\eqref{eqn:LP}. 
There are 8 feasible \pursuer{} next states and 5 feasible \evader{} next states, as illustrated in Fig.~\ref{fig:lp-construct}. 
Based on~\eqref{eqn:payoff_matrix}, the resulting payoff matrix is
\vspace{-0.1in}
\[
    M^\top = \scalebox{0.85}{$\left[
        \begin{array}{cccccccc}
            r & 0 & 0 & 0 & 0 & 0 & 0 & 0  
            \\
            0 & 0 & 0 & 0 & 0 & 0 & r & 0  
            \\
            r & r & r & r & r & r & r & 1  
            \\
            0 & 0 & 0 & 0 & 0 & 0 & 0 & 1  
            \\
            r & r & r & r & r & r & r & 1  
            \\
        \end{array}
    \right]$}.
\]

For the optimistic case ($r=1$), the LP in~\eqref{eqn:LP} yields $v^*_{\optism} \!=\! \tfrac{1}{3}$, providing an upper bound on the \pursuer{}'s winning probability. 
The resulting optimal \pursuer{} policy is 
\(
    [\tfrac{1}{3}, 0, 0, 0, 0, 0, \tfrac{1}{3}, \tfrac{1}{3}]
\),
so that the \pursuer{} randomizes uniformly among three actions: staying at its current position $x^{(0)}$, moving to $x^{(6)}$, and moving to $x^{(7)}$.
The \evader{} randomization is given by \( [\tfrac{1}{3}, \tfrac{1}{3}, 0, \tfrac{1}{3}, 0] \),
that is, the \evader{} uniformly randomizes among staying at its current position $y^{(0)}$, moving leftward to $y^{(1)}$ and rightward to $y^{(3)}$.

In contrast, for the pessimistic case ($r=0$), we obtain $v^*_{\pess} \!=\! 0$, and an optimal solution is for both agents to remain at their current positions. 
This highlights that the optimistic \pursuer{} adopts a more proactive strategy to chase the \evader{}, whereas the pessimistic \pursuer{} remains conservative.

Finally, we simulate 1,000 trajectories using the Nash policies and obtain a \pursuer{} win rate of $0.102 \in [0,\hspace{+0.01in} \tfrac{1}{3}] = [v^*_\pess, v^*_\optism]$, consistent with the bounds established in Corollary~\ref{cor:bounds}.

\begin{figure}[t]
    \centering
    \includegraphics[width=\linewidth]{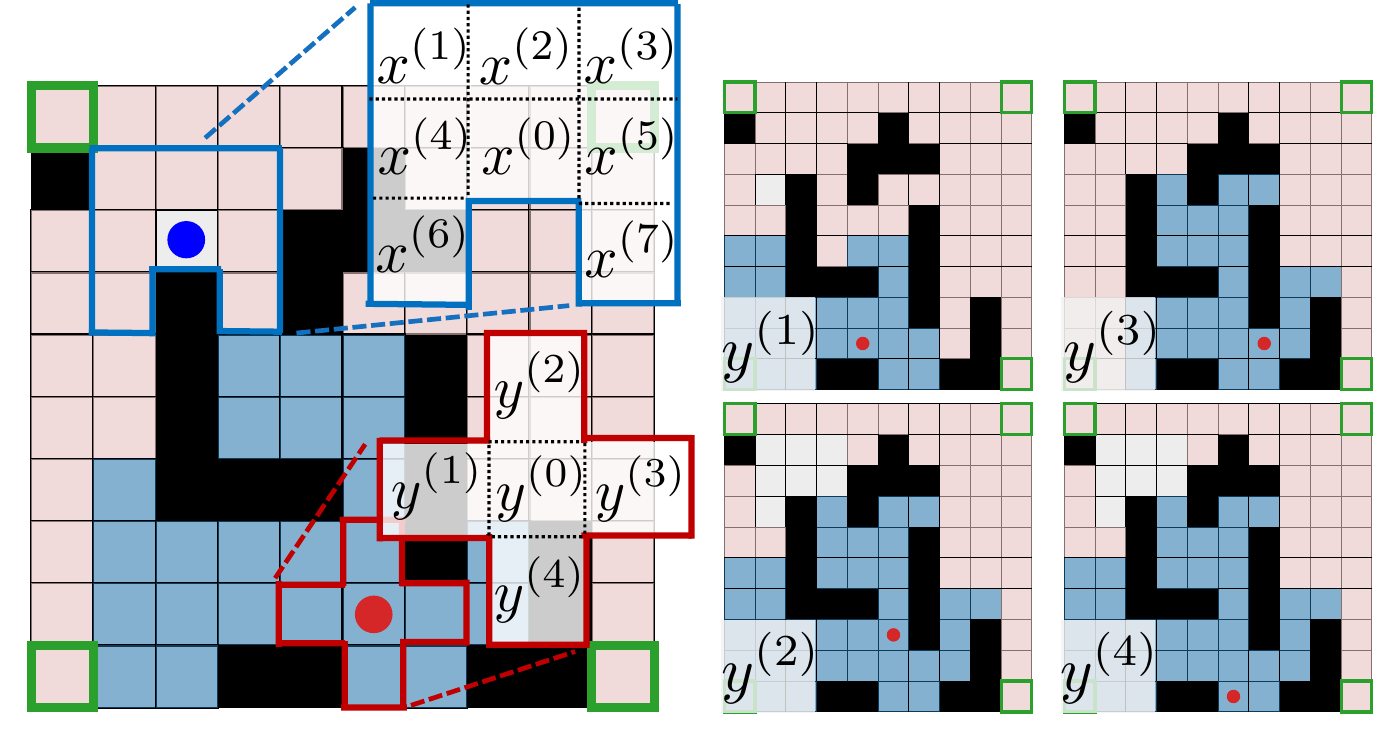}
    \vspace{-0.3in}
    \caption{Illustrative example of LP construction in the gray zone.}
    \label{fig:lp-construct}
    \vspace{-0.2in}
\end{figure}

\section{Conclusion}

In this work, we developed a reachability-based framework for analyzing dominance regions in simultaneous-move pursuit--evasion games on general graphs. 
Using a set-based dynamic programming approach, we characterized the \pursuer{}’s winning and losing regions, which lead to closed-loop feedback strategies that overcome limitations of existing open-loop analysis of dominance regions. 
For gray-zone states, where neither player can guarantee victory, we introduced a matrix-game formulation that provides bounds on the \pursuer{}’s winning probability.
We demonstrate the correctness of the dominance-region characterization and the validity of the proposed gray-zone bounds through numerical simulations.
Future work includes extending the framework to multi-agent settings and tightening gray-zone bounds.

\section{Acknowledgment}
This work has been supported in part by ARL under DCIST CRA W911NF-17-2-0181 and by ONR award N00014-23-1-2308.

\bibliographystyle{IEEEtran}
\bibliography{refs}

\end{document}